\documentclass[aps,prd,twocolumn,10pt,superscriptaddress,amssymb,amsmath,nofootinbib]{revtex4-2}

\usepackage{amsmath,amssymb,amsthm,physics,mathtools}
\usepackage{bm}
\usepackage[pdftex,breaklinks,colorlinks,
linkcolor=Blue,
citecolor=teal,
anchorcolor=red,
urlcolor=cyan,
pdfencoding=auto]{hyperref}
\usepackage[dvipsnames]{xcolor}
\usepackage[mathscr]{euscript}
\usepackage{orcidlink}
\newtheorem{theorem}{Theorem}
\newtheorem{lemma}{Lemma}
\newtheorem{proposition}{Proposition}
\newtheorem{definition}{Definition}

\usepackage{hyperref}
\hypersetup{colorlinks=true}
\begin{document}
\title{A Quantum Weak Cosmic Censorship and Its Proof}
\author{Naman Kumar\,\orcidlink{0000-0001-8593-1282}}
\affiliation{Department of Physics, Indian Institute of Technology Gandhinagar, Palaj, Gujarat, India, 382355}
\email{namankumar5954@gmail.com, naman.kumar@iitgn.ac.in}
\date{\today}

\begin{abstract}
Recent work has highlighted the deep connection between quantum information and spacetime geometry. Bousso and Shahbazi-Moghaddam (Phys. Rev. Lett. 128, 231301 (2022)) proved that ``hyperentropic'' regions---where entropy exceeds the area bound---inevitably lead to singularity formation. In this work, we explore the converse implication: does the thermodynamic consistency of such singularities require them to be hidden? We answer in the affirmative, establishing a Quantum Weak Cosmic Censorship principle governed by Generalized Entropy. This provides a semiclassical mechanism for censorship which forbids naked singularities. Since Quantum Weak Cosmic Censorship is a semiclassical statement, it is more robust than the classical Weak Cosmic Censorship showing naked singularities are forbidden in nature even if quantum effects are taken into account.
\end{abstract}

\maketitle

\paragraph*{\textbf{Introduction}.}
Penrose's singularity theorem \cite{Penrose:1964wq} and results thereafter \cite{Bekenstein:1972tm,Hawking:1973uf,Maldacena:1997re} are of central importance in general relativity and quantum gravity and are fundamental to our understanding of singularities, and black holes. The singularity theorem shows that singularities which are defined as incompleteness of null or timelike geodesics are inevitable in a gravitational collapse leading to the formation of black holes and are not an artifact of some choice of highly symmetric spacetime. This signals the breakdown of classical structure of spacetime and demands a replacement by a Ultraviolet (UV) complete theory of gravity capable of resolving the issues of singularities. 

In this regard, Penrose’s Weak Cosmic Censorship Conjecture \cite{Penrose:1969pc} (WCCC) remains one of the central open problems in gravitational physics. It posits that singularities arising from generic gravitational collapse are concealed behind event horizons, shielding distant observers and preserving the predictability of spacetime at future null infinity. While WCCC is widely regarded as essential for the internal consistency of black hole thermodynamics, a general proof within classical General Relativity has remained elusive. Existing analyses are often beset by finely tuned counterexamples, suggesting that censorship may not be guaranteed by classical dynamics alone. This difficulty motivates the possibility that the origin of cosmic censorship lies instead in more fundamental principles, such as the limits imposed by quantum information.

Recent theoretical developments lend strong support to this perspective, indicating that spacetime singularities are not merely classical pathologies but arise as necessary consequences of quantum information bounds. Bousso and Shahbazi-Moghaddam~\cite{Bousso:2022cun} demonstrated that ``hyperentropic'' regions—where the entropy exceeds the Bekenstein--Hawking capacity of the boundary (\(S > A/4G\hbar\))—cannot be realized within a smooth spacetime geometry. They showed that if such a region undergoes contraction, the spacetime must break down in order to avoid violating the holographic density bound. Singularities thus emerge as the geometric response to an overload of quantum information.

Complementing this result, subsequent work has extended singularity theorems into the fully semiclassical regime (which is an advance over Penrose-Wall singularity theorem \cite{Wall:2010jtc}, with finite \(cG\hbar\)), ruling out nonsingular ``bounces'' and establishing that singularities remain inevitable even in the presence of significant quantum effects~\cite{Bousso:2025xyc}. Together, these results indicate that singularities are robust predictions of semiclassical gravity, rather than artifacts of an incomplete classical description. This, in turn, necessitates a mechanism for censorship that operates within the same semiclassical framework.

In this work, we investigate the thermodynamic converse of these results: if singularity formation is required to prevent violations of holographic entropy bounds during collapse, does the consistency of the singular endpoint similarly require it to be causally hidden? We argue that the answer is affirmative. We identify a mechanism by which the same entropic principles that mandate singularity formation also enforce its censorship, leading to a quantum version of weak cosmic censorship grounded in generalized entropy. Since Quantum Weak Cosmic Censorship is a semiclassical statement, it is more robust than the classical Weak Cosmic Censorship showing naked singularities are forbidden in nature even if quantum effects are taken into account.

\paragraph*{\textbf{Semiclassical Setup and Entropy Conditions}.}

Let $(\mathcal{M}, g)$ be a globally hyperbolic semiclassical spacetime
in which the generalized entropy functional is well-defined away from
singular boundaries. We consider a smooth null hypersurface $N$
generated by a congruence of future-directed null geodesics, and
let $\gamma(\lambda)$ be one of its generators, affinely parameterized
by $\lambda \in (0,\lambda_*)$, where $\lambda_* < \infty$.
We assume that $\gamma$ is future-inextendible at finite affine
parameter, which is the standard geometric characterization
of a spacetime singularity.

Fix a reference codimension-2 spacelike cut $\Sigma_0 \subset N$.
We consider localized null deformations of $\Sigma_0$ supported in
an arbitrarily small transverse ``pencil'' neighborhood of the
generator $\gamma$. Denote by $\Sigma_\lambda$ the cuts obtained by
deforming $\Sigma_0$ along $N$ with deformation profile $V_\lambda(y)$,
supported in the pencil neighborhood of $\gamma$, with $V_0(y)=0$
and $\partial_\lambda V_\lambda(y)\ge 0$. Let $B_\lambda$ denote
the chosen exterior region associated with the entangling cut $\Sigma_\lambda$,
with the choice of side fixed once and for all.

The associated generalized entropy is defined as\footnote{This construction follows the standard definition of the quantum expansion
introduced in the original formulation of the Quantum Focusing Conjecture
\cite{Bousso:2015mna}, where the generalized entropy is treated as a
functional of null deformations $V(y)$ of a cut of the null hypersurface,
and the quantum expansion is defined through variations supported in an
infinitesimal transverse ``pencil'' neighborhood of a single generator.}
\begin{equation}
S_{\mathrm{gen}}(\lambda)
=
S_{\mathrm{gen}}(B_\lambda)
=
\frac{A(\Sigma_\lambda)}{4G\hbar}
+
S_{\mathrm{out}}(B_\lambda),
\end{equation}
with the deformation localized to the pencil and the remainder
of the cut held fixed.

We assume that $S_{\mathrm{gen}}(\lambda)$ is differentiable for all
$\lambda \in (0,\lambda_*)$ within the regime of validity of
semiclassical gravity, so that the generalized expansion
$\Theta_{\rm gen}$ associated with the localized null deformation
at the generator $\gamma$ is well-defined.
We now state the assumptions underlying our argument.

\begin{itemize}

\item \textbf{(A1) Generator-wise quantum focusing.}
Define the generalized expansion
\begin{equation}
\Theta_{\mathrm{gen}}(\lambda)\;:=\;\frac{d S_{\mathrm{gen}}(\lambda)}{d\lambda},
\end{equation}
along a future-directed null generator $\gamma(\lambda)$. We assume that
$\Theta_{\mathrm{gen}}(\lambda)$ admits a continuous extension to
$[0,\lambda_*)$ and, while semiclassical evolution is valid,
\begin{equation}
\frac{d\Theta_{\mathrm{gen}}}{d\lambda}\le 0
\qquad\text{for all }\lambda\in(0,\lambda_*) .
\end{equation}
Hence $\Theta_{\mathrm{gen}}$ is nonincreasing on $(0,\lambda_*)$.
Here $\lambda_*$ denotes the affine parameter at which the generator becomes
future-inextendible, usually a singular endpoint.

\end{itemize}
\paragraph*{\textbf{On the form of quantum focusing used in this work}.}
The original formulation of the Quantum Focusing Conjecture (QFC)
\cite{Bousso:2015mna} is a functional statement about second variations
of the generalized entropy under arbitrary localized null deformations
of a codimension--2 surface. In particular, the non-coincident
(off-diagonal) second variations are expected to be non-positive,
while the coincident (diagonal) limit---which would imply
generator-wise monotonicity---remains conjectural in general
semiclassical gravity.

In the present work we do not require control over mixed transverse
(off-diagonal) variations. Our argument is restricted to deformations
supported in an arbitrarily small null pencil around a single generator
of a fixed null hypersurface. The only property needed is
generator-wise monotonicity of the generalized expansion,
\begin{equation}
\frac{d \Theta_{\mathrm{gen}}}{d\lambda} \le 0 ,
\end{equation}
along that generator within the regime of validity of semiclassical
gravity. This may be viewed as the generator-wise monotonicity
condition associated with the diagonal specialization of QFC.
The proof of Theorem~\ref{Theorem1} relies solely on integrating this
one-dimensional inequality. Accordingly, the assumption used here is
weaker than the full functional QFC, but stronger than the restricted
QFC discussed below.

We emphasize that this assumption is stronger than the restricted
Quantum Focusing Conjecture (rQFC) proposed in
\cite{Shahbazi-Moghaddam:2022hbw}. The rQFC constrains the derivative
of the generalized expansion only at quantum marginal surfaces
($\Theta_{\mathrm{gen}}=0$), and does not determine the sign
of $d\Theta_{\mathrm{gen}}/d\lambda$ when
$\Theta_{\mathrm{gen}}\ne0$. Consequently, rQFC alone does not imply
the generator-wise monotonicity condition
$d\Theta_{\mathrm{gen}}/d\lambda \le 0$ required by the full QFC used in
Theorem~\ref{Theorem1}. In particular, the stronger generator-wise
monotonicity assumption is required to establish the existence of a
Quantum Marginal Surface at finite affine parameter. Nevertheless,
once such a surface forms, aspects of the subsequent focusing and
causal-sealing argument are compatible with the type of restricted
focusing behavior captured by rQFC.

Finally, recent work has identified counterexamples to the strongest
unrestricted form of the QFC in certain two-dimensional semiclassical
gravity settings \cite{Franken:2025gwr}. In particular, the full
functional inequality can fail, and generator-wise monotonicity of the
quantum expansion $d\Theta_{\rm gen}/d\lambda\le0$ may be violated
away from quantum marginality. Nevertheless, the restricted focusing
behavior at $\Theta_{\rm gen}=0$ appears to remain consistent in those
models, motivating the restricted Quantum Focusing Conjecture (rQFC).
In the present work we therefore treat the generator-wise inequality
$d\Theta_{\rm gen}/d\lambda \le 0$ as an explicit assumption holding
within the semiclassical regime of interest.
\begin{itemize}
\item \textbf{(A2) Initial expansion.}
At a reference slice $\lambda=0$ the generalized expansion is initially positive:
\begin{equation}
\Theta_{\mathrm{gen}}(0)>0 .
\end{equation}

\item \textbf{(A3) Semiclassical singularity (local entropy incompleteness).}

We assume that the null generator $\gamma$ is geodesically
inextendible at finite affine parameter $\lambda_*$
within the semiclassical regime.
We require that this inextendibility be reflected in the
behavior of the generalized entropy functional under
localized null deformations along $\gamma$.
Specifically, we say that $\gamma$ terminates at a
\emph{semiclassical singularity} if
$S_{\mathrm{gen}}(\lambda)$ fails to admit a finite extension
to $\lambda_*$, i.e.,
\begin{equation}
\lim_{\lambda\to\lambda_*^-} S_{\rm gen}(\lambda)
\quad \text{does not exist as a finite limit}.
\end{equation}
\end{itemize}

In semiclassical gravity, geometry and entropy are intertwined through
the generalized entropy functional. Recent work has shown that
sufficiently hyperentropic configurations imply geodesic incompleteness
\cite{Bousso:2022cun}, indicating that entropy bounds can constrain the
formation of singular geometries. A spacetime singularity is
characterized by geodesic incompleteness: a causal geodesic terminates
at finite affine parameter and admits no geometric continuation. As a
cut approaches such a singular endpoint along a generator, curvature
invariants typically grow without bound and the quantum state of fields
near the generator may cease to remain regular. In such circumstances
the generalized entropy associated with localized null deformations
need not admit a finite limit. Assumption (A3) therefore encodes the
expectation that geometric inextendibility along the generator is
accompanied by a corresponding \emph{entropy incompleteness}.

\begin{theorem}[QMS formation from entropy incompleteness]
\label{Theorem1}
Assume \emph{(A1)}, \emph{(A2)}, and \emph{(A3)}. Then there exists
$\lambda_H\in(0,\lambda_*)$ such that
\begin{equation}
\Theta_{\mathrm{gen}}(\lambda_H)=0,
\end{equation}
i.e. a quantum marginal surface occurs at finite affine parameter before the singular endpoint.
\end{theorem}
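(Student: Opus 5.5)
The plan is a proof by contradiction using only the one-dimensional calculus content of (A1)--(A3). Suppose no $\lambda_H\in(0,\lambda_*)$ with $\Theta_{\mathrm{gen}}(\lambda_H)=0$ exists. By (A1), $\Theta_{\mathrm{gen}}$ extends continuously to $[0,\lambda_*)$, and by (A2), $\Theta_{\mathrm{gen}}(0)>0$. If $\Theta_{\mathrm{gen}}(\lambda_1)<0$ for some $\lambda_1\in(0,\lambda_*)$, the intermediate value theorem on $[0,\lambda_1]$ gives a zero in $(0,\lambda_1)$; that zero is strictly positive because $\Theta_{\mathrm{gen}}(0)\neq 0$. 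This contradicts the hypothesis, so under it we have $\Theta_{\mathrm{gen}}(\lambda)>0$ for all $\lambda\in(0,\lambda_*)$.

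Next I combine this sign information with the monotonicity in (A1). Since $d\Theta_{\mathrm{gen}}/d\lambda\le 0$ on $(0,\lambda_*)$ and $\Theta_{\mathrm{gen}}$ is continuous at $0$, we get
\begin{equation}
0<\Theta_{\mathrm{gen}}(\lambda)\le \Theta_{\mathrm{gen}}(0)<\infty
\qquad\text{for all }\lambda\in(0,\lambda_*).
\end{equation}
So $S_{\mathrm{gen}}(\lambda)$ is differentiable on $(0,\lambda_*)$ with a derivative that is strictly positive and uniformly bounded.

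Fix any $\lambda_0\in(0,\lambda_*)$, where $S_{\mathrm{gen}}(\lambda_0)$ is finite by the standing differentiability assumption. I will use the mean value theorem rather than integrating, which sidesteps any integrability question for $\Theta_{\mathrm{gen}}$. For $\lambda_0\le\lambda<\lambda'<\lambda_*$ it gives
\begin{equation}
0< S_{\mathrm{gen}}(\lambda')-S_{\mathrm{gen}}(\lambda)\le \Theta_{\mathrm{gen}}(0)\,(\lambda'-\lambda).
\end{equation}
Hence $S_{\mathrm{gen}}$ is increasing and Lipschitz on $[\lambda_0,\lambda_*)$. Because $\lambda_*<\infty$, it is also bounded above by $S_{\mathrm{gen}}(\lambda_0)+\Theta_{\mathrm{gen}}(0)(\lambda_*-\lambda_0)$. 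A bounded monotone function has a finite limit as $\lambda\to\lambda_*^-$; equivalently, the Lipschitz bound makes $S_{\mathrm{gen}}(\lambda_n)$ Cauchy for any sequence $\lambda_n\to\lambda_*$. This directly contradicts (A3), so a zero $\lambda_H\in(0,\lambda_*)$ must exist.

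The step requiring the most care is the passage from the sign of $\Theta_{\mathrm{gen}}$ to the finiteness of the limit. Positivity alone would allow $S_{\mathrm{gen}}\to+\infty$, and it is exactly the upper bound $\Theta_{\mathrm{gen}}\le\Theta_{\mathrm{gen}}(0)$ from focusing, together with $\lambda_*<\infty$, that rules this out. I will state explicitly that both the continuous extension of $\Theta_{\mathrm{gen}}$ to $\lambda=0$ and the finiteness of the affine parameter $\lambda_*$ are used here, since otherwise the bound can fail.
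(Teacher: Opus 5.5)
Your proof is correct and follows the paper's argument. Assuming no zero, you get $0<\Theta_{\mathrm{gen}}\le\Theta_{\mathrm{gen}}(0)$ from focusing, so $S_{\mathrm{gen}}$ is increasing and bounded as $\lambda\to\lambda_*^-$ (using $\lambda_*<\infty$), which contradicts (A3). The differences are only technical refinements: you apply the intermediate value theorem at the outset rather than at the end, and you use the mean value theorem from an interior point $\lambda_0>0$ instead of integrating from $0$, which avoids the paper's tacit assumption that $S_{\mathrm{gen}}$ is continuous at $\lambda=0$.
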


\begin{proof}
Assume for contradiction that no quantum marginal surface forms on $\gamma$,
so that
\[
\Theta_{\mathrm{gen}}(\lambda)>0
\qquad \text{for all } \lambda\in(0,\lambda_*).
\]
By (A1), the function $\Theta_{\mathrm{gen}}$ is continuous on $[0,\lambda_*)$
and nonincreasing on $(0,\lambda_*)$. Hence
\[
0<\Theta_{\mathrm{gen}}(\lambda)\le \Theta_{\mathrm{gen}}(0)
\qquad \text{for all } \lambda\in(0,\lambda_*).
\]

Since
\[
\frac{dS_{\mathrm{gen}}}{d\lambda}=\Theta_{\mathrm{gen}},
\]
it follows that $S_{\mathrm{gen}}(\lambda)$ is strictly increasing on
$(0,\lambda_*)$. Integrating from $0$ to $\lambda$ gives
\[
\begin{split}
&S_{\mathrm{gen}}(\lambda)
= S_{\mathrm{gen}}(0)+\int_0^\lambda \Theta_{\mathrm{gen}}(s)\,ds
\le S_{\mathrm{gen}}(0)+\Theta_{\mathrm{gen}}(0)\lambda\\&\hspace{2cm}
\le S_{\mathrm{gen}}(0)+\Theta_{\mathrm{gen}}(0)\lambda_* .
\end{split}
\]
Thus $S_{\mathrm{gen}}(\lambda)$ is increasing on $(0,\lambda_*)$ and
bounded above. Therefore, the left--hand limit
\[
\lim_{\lambda\to\lambda_*^-} S_{\mathrm{gen}}(\lambda)
\]
exists and is finite. This contradicts assumption (A3).

Hence $\Theta_{\mathrm{gen}}$ cannot remain strictly positive on all of
$(0,\lambda_*)$. Therefore there exists $\lambda_1\in(0,\lambda_*)$ such that
\[
\Theta_{\mathrm{gen}}(\lambda_1)\le 0 .
\]
Since $\Theta_{\mathrm{gen}}(0)>0$ by assumption (A2) and
$\Theta_{\mathrm{gen}}$ is continuous on $[0,\lambda_*)$, the intermediate
value theorem implies that there exists $\lambda_H\in(0,\lambda_1]$ such that
\[
\Theta_{\mathrm{gen}}(\lambda_H)=0 .
\]
\end{proof}

\medskip

Thus, under the QFC, any future-inextendible null generator whose endpoint is
entropy-incomplete must encounter a quantum marginal surface at finite affine
parameter. 

However, note that this result does not imply the existence of a classical event horizon,
which is a global, teleological construct defined with reference to
future null infinity.
Rather, it provides a local, information-theoretic obstruction to
the unrestricted null approach toward a singular boundary.\vspace{2mm}

\paragraph*{\textbf{Quantum Weak Cosmic Censorship}.}
The quantum-entropic obstruction established above suggests that any semiclassical
realization of cosmic censorship must be compatible with generalized entropy focusing.
Motivated by this observation, we formulate a quantum extension of weak cosmic censorship
that promotes local entropy saturation along null generators to a global causal
constraint.

\begin{definition}[Quantum Weak Cosmic Censorship]
A semiclassical spacetime is said to satisfy \emph{Quantum Weak Cosmic Censorship} (QWCC)
if no future-directed null generator that enters a regime of non-increasing generalized
entropy,
\begin{equation}
\Theta_{\mathrm{gen}} \le 0 ,
\end{equation}
can subsequently reach future null infinity $\mathscr{I}^+$ while remaining within the
domain of validity of semiclassical gravity.
\end{definition}

This conjecture may be viewed as a quantum analogue of Penrose's weak cosmic censorship.
Instead of classical trapped surfaces or event horizons, causal censorship is enforced
by generalized entropy focusing. In this framework, surfaces of vanishing generalized
expansion serve as local, information-theoretic precursors to horizon-based censorship.\vspace{2mm}

Quantum Weak Cosmic Censorship does not assume the existence of classical event horizons
or trapped surfaces, nor does it rely on classical energy conditions. Rather, it asserts
that a decrease of generalized entropy along a null generator signals causal
disconnection from future null infinity.\vspace{2mm}

\paragraph*{\textbf{A proof of QWCC}.}
The quantum-entropic obstruction established above (Theorem~\ref{Theorem1}) shows that a
future-directed null generator with initially positive generalized expansion
cannot remain everywhere entropy-expanding if it terminates at an
entropy-incomplete singular boundary. Under the Quantum Focussing Conjecture,
such a generator must encounter a quantum marginal surface (QMS) at finite
affine parameter. In this section we make precise the additional global inputs
required to upgrade this local obstruction into the global statement we call
\emph{Quantum Weak Cosmic Censorship} (QWCC).\vspace{2mm}

We work in a globally hyperbolic semiclassical spacetime $(\mathcal M,g)$
admitting a well-defined generalized entropy functional $S_{\mathrm{gen}}$ on
Cauchy-splitting codimension-$2$ surfaces. We assume that the Quantum Focussing
Conjecture (QFC) holds in the form
\[
\frac{d\Theta_{\mathrm{gen}}}{d\lambda}\le 0
\]
along null generators for which semiclassical evolution remains valid
(cf.~Bousso \emph{et al.}~\cite{Bousso:2015mna}).

We further assume the initial expansion condition
$\Theta_{\mathrm{gen}}(0)>0$ and that future-inextendible null generators
terminate at entropy-incomplete singular boundaries, in the sense made precise
in Assumption~(A3).\vspace{2mm}

Under these assumptions, Theorem~\ref{Theorem1} guarantees the existence of a finite-affine
quantum marginal surface encountered along any such generator. The remainder of
this section is devoted to showing that, provided this surface is outermost and
the semiclassical regime persists in a neighbourhood of it until either a
caustic or the singular boundary, the existence of the QMS suffices to prevent
any null generator entering the quantum-trapped region from reaching future null
infinity. This is precisely the content of Quantum Weak Cosmic Censorship.

\begin{definition}[Outermost Quantum Marginal Surface]
Let $N$ be a null hypersurface generated by a future-directed null congruence, and let
$\gamma(\lambda)$ be a chosen generator of $N$, affinely parameterized by $\lambda$.
Let $\{\Sigma_\lambda\}$ be the family of codimension-2 cuts of $N$ obtained from a
reference cut $\Sigma_0$ by localized null deformations supported in an arbitrarily small
transverse pencil neighborhood of $\gamma$, as in the definition of $S_{\mathrm{gen}}(\lambda)$.
Suppose there exists $\lambda_H$ such that
\[
\Theta_{\mathrm{gen}}(\lambda_H)=0,
\]
and
\[
\Theta_{\mathrm{gen}}(\lambda)>0
\qquad \text{for all } \lambda\in(0,\lambda_H).
\]
Then the cut $\Sigma_H:=\Sigma_{\lambda_H}$ is called an \emph{outermost quantum marginal
surface} (outermost QMS) along $\gamma$ if there is no $\lambda\in(0,\lambda_H)$ for which
\[
\Theta_{\mathrm{gen}}(\lambda)=0.
\]
Equivalently,
\[
\lambda_H=\inf\{\lambda>0:\Theta_{\mathrm{gen}}(\lambda)=0\}.
\]
\end{definition}

\begin{definition}[Weak Quantum Trapped Surface (WQTS)]
A compact codimension-2 surface $\Sigma$ is a \emph{weak quantum trapped surface} if the generalized expansion
along both future-directed null normals is nonpositive on $\Sigma$,
\[
\Theta_{\mathrm{gen}}^{(k)}\le0,\qquad \Theta_{\mathrm{gen}}^{(\ell)}\le0,
\]
where $k^a,\ell^a$ are the two independent null normals. If the inequalities are strict on an open set of $\Sigma$
we call it a (strict) \emph{quantum trapped surface (QTS)}.
\end{definition}

We will first show that an outermost QMS is weakly quantum trapped on natural grounds.

\begin{lemma}[Outermost QMS is weakly quantum trapped]
\label{lem:outermost-trapped}
Let $\Sigma_H$ be an outermost quantum marginal surface (QMS) encountered along a
future-directed null congruence generated by $k^a$.
Assume the Quantum Focusing Conjecture (QFC) holds and that the generalized entropy
$S_{\mathrm{gen}}$ depends continuously on smooth null deformations of the surface.
Then
\begin{equation}
\begin{split}
&\Theta_{\mathrm{gen}}^{(k)}(\Sigma_H)=0,
\quad
\Theta_{\mathrm{gen}}^{(k)}<0
\ \text{on sufficiently small}\\&\hspace{4.4cm}\text{deformations just inside }\Sigma_H .
\end{split}
\end{equation}
Moreover, if $\Sigma_H$ is outermost with respect to deformations along
\emph{both} future-directed null normals $k^a$ and $\ell^a$, then
\[
\Theta_{\mathrm{gen}}^{(\ell)}(\Sigma_H)\le 0 ,
\]
so that $\Sigma_H$ is a weakly quantum trapped surface.
\end{lemma}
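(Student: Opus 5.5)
The plan is to treat the two claims separately. The $k$-statement reduces to the one-dimensional monotonicity (A1) used in Theorem~\ref{Theorem1}. The $\ell$-statement follows from the outermost condition read along the $\ell$-family of deformations.

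\emph{First claim.} By definition of an outermost QMS, $\Theta^{(k)}_{\mathrm{gen}}(\lambda_H)=0$ and $\Theta^{(k)}_{\mathrm{gen}}>0$ on $(0,\lambda_H)$. By (A1), $\Theta^{(k)}_{\mathrm{gen}}$ is continuous and nonincreasing, so $\Theta^{(k)}_{\mathrm{gen}}(\lambda)\le 0$ for all $\lambda\ge\lambda_H$ within the semiclassical regime. Here ``just inside $\Sigma_H$'' means the pencil deformations $\Sigma_\lambda$ with $\lambda$ slightly larger than $\lambda_H$, i.e.\ further along $k^a$, on the side away from $B_\lambda$. This gives the non-strict inequality immediately.

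The strict inequality needs something beyond bare monotonicity, since $\Theta^{(k)}_{\mathrm{gen}}$ could be identically zero on some interval $[\lambda_H,\lambda_H+\epsilon)$. I would use two inputs:
\begin{itemize}
\item[(i)] If $\Theta^{(k)}_{\mathrm{gen}}$ vanished on such an interval, then $S_{\mathrm{gen}}$ would be constant there. Combined with (A3) and a genericity assumption, namely that $d\Theta_{\mathrm{gen}}/d\lambda<0$ somewhere in every neighbourhood, as in the strict form of the QFC whenever there is nonzero shear or nonzero $\langle T_{kk}\rangle$, this is non-generic.
\item[(ii)] Failing genericity, one can instead redefine $\lambda_H$ as the supremum of the zero set inside the marginal plateau. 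Continuity of $S_{\mathrm{gen}}$ under deformations, which the lemma assumes, ensures this is still a QMS.
\end{itemize}
I expect this strictness to be the main obstacle. I would state a mild genericity hypothesis explicitly rather than try to derive strictness from (A1) alone.

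\emph{Second claim.} Argue by contradiction. Suppose $\Theta^{(\ell)}_{\mathrm{gen}}(\Sigma_H)>0$ at some point. By continuity of $S_{\mathrm{gen}}$ under smooth deformations, $\Theta^{(\ell)}_{\mathrm{gen}}>0$ on a pencil neighbourhood of that point.

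Now combine a small outward deformation along $-\ell^a$ (or equivalently along the past direction that moves the surface outward) with a compensating adjustment along $k^a$. By the implicit function theorem applied to $\Theta^{(k)}_{\mathrm{gen}}$, this produces a nearby surface $\Sigma'$ that is still marginal, $\Theta^{(k)}_{\mathrm{gen}}(\Sigma')=0$, but lies strictly outside $\Sigma_H$. The implicit function theorem applies because $\partial_\lambda\Theta^{(k)}_{\mathrm{gen}}\neq0$ at $\Sigma_H$ by the strictness just discussed.

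The existence of $\Sigma'$ contradicts the hypothesis that $\Sigma_H$ is outermost with respect to deformations along both normals. Hence $\Theta^{(\ell)}_{\mathrm{gen}}(\Sigma_H)\le0$, and together with $\Theta^{(k)}_{\mathrm{gen}}(\Sigma_H)=0$ this shows $\Sigma_H$ is a weak quantum trapped surface.

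The sign bookkeeping in the outward construction is delicate. A cleaner route is simply to interpret ``outermost along both normals'' as including the condition that no outward $\ell$-deformation preserves non-positivity, in which case the second claim is essentially definitional. I would present the implicit-function argument as the justification for that interpretation.
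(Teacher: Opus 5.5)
\textbf{First claim.} For the $k$-part you follow the paper's route: QFC monotonicity past the first zero. You are right that strictness does not follow from monotonicity alone. The paper's own step, that because $\Sigma_H$ is the first zero ``continuity implies'' $\Theta^{(k)}_{\mathrm{gen}}$ cannot vanish identically on an interval beyond $\lambda_H$, is a non sequitur. The function $\max(\lambda_H-\lambda,0)$ is continuous and nonincreasing, first vanishes at $\lambda_H$, and then stays at zero.

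So an explicit hypothesis such as your (i) is genuinely needed. (A3) contributes nothing here, since it only excludes a plateau that extends all the way to $\lambda_*$. Your alternative (ii) does not rescue the lemma as stated. The right end of the plateau is not the first zero, so it is not the outermost QMS the lemma is about.

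\textbf{Second claim: the gap.} The genuine gap is in the $\ell$-part, and the paper's argument has the same defect. The paper deforms along $\ell^a$ and then uses the intermediate value theorem along $k^a$ where you use the implicit function theorem.

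Your construction of $\Sigma'$ never uses the contradiction hypothesis $\Theta^{(\ell)}_{\mathrm{gen}}>0$. The compensating shift is $d\lambda/d\delta=-\partial_\delta\Theta^{(k)}_{\mathrm{gen}}/\partial_\lambda\Theta^{(k)}_{\mathrm{gen}}$. So which side of $\Sigma_H$ the surface $\Sigma'$ lands on is fixed by the cross variation $\delta_\ell\Theta^{(k)}_{\mathrm{gen}}$, not by $\Theta^{(\ell)}_{\mathrm{gen}}$. Classically, the cross-focusing equation for $\delta_\ell\theta^{(k)}$ contains $\theta^{(\ell)}$ only through the product $\theta^{(k)}\theta^{(\ell)}$, which vanishes on $\Sigma_H$. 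Hence the argument either proves too much or proves nothing. If it worked, no surface with a transversal first zero could ever be outermost along both normals, whatever the sign of $\Theta^{(\ell)}_{\mathrm{gen}}$.

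\textbf{A concrete check.} In an FRW universe with $\dot H<0$ (e.g.\ matter-dominated), follow the ingoing future null congruence $k$ from a sphere beyond the Hubble radius.
\begin{itemize}
\item Here $\theta^{(k)}\propto HR-1$. It first vanishes at the Hubble sphere and crosses zero transversally, since $\partial_t(HR-1)=\dot H R<0$ there.
\item Classical focusing holds under the null energy condition, so every local input of your implicit-function step is satisfied.
\item Yet $\theta^{(\ell)}\propto HR+1>0$: the surface is marginally anti-trapped.
\end{itemize}
No local deformation argument near $\Sigma_H$ can distinguish this surface from a black-hole apparent horizon. So $\Theta^{(\ell)}_{\mathrm{gen}}(\Sigma_H)\le 0$ must either be built into the hypothesis or come from global input. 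Your closing ``definitional'' reading is the honest fix.

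\textbf{A further gap in the implicit-function step.} The implicit function theorem needs $\partial_\lambda\Theta^{(k)}_{\mathrm{gen}}(\lambda_H)\neq 0$. This follows neither from strict negativity just inside $\Sigma_H$ nor from your (i); consider $-(\lambda-\lambda_H)^3$. The paper's sign-change argument avoids this extra transversality assumption.
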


\begin{proof}
By definition of $\Sigma_H$, the generalized expansion along $k^a$ satisfies
\[
\Theta_{\mathrm{gen}}^{(k)}(\Sigma_H)=0,
\qquad
\Theta_{\mathrm{gen}}^{(k)}(\lambda)>0
\ \text{for}\ \lambda\in(0,\lambda_H),
\]
where $\lambda$ is an affine parameter along the congruence and $\lambda_H$
labels $\Sigma_H$.
By the QFC, $\Theta_{\mathrm{gen}}^{(k)}(\lambda)$ is nonincreasing along $k^a$,
i.e.
\[
\partial_\lambda \Theta_{\mathrm{gen}}^{(k)}\le 0 .
\]
Since $\Sigma_H$ is the \emph{first} surface at which $\Theta_{\mathrm{gen}}^{(k)}$
vanishes, continuity implies that $\Theta_{\mathrm{gen}}^{(k)}$ cannot remain
identically zero on any open interval beyond $\lambda_H$.
Hence, for sufficiently small $\varepsilon>0$,
\[
\Theta_{\mathrm{gen}}^{(k)}(\lambda_H+\varepsilon)
=
\int_{\lambda_H}^{\lambda_H+\varepsilon}
\partial_\lambda \Theta_{\mathrm{gen}}^{(k)}\, d\lambda
<0 ,
\]
establishing that the congruence becomes quantum trapped immediately to the interior
of $\Sigma_H$ along $k^a$.

To determine the sign of the generalized expansion along the other future-directed
null normal $\ell^a$, suppose for contradiction that
$\Theta_{\mathrm{gen}}^{(\ell)}(\Sigma_H)>0$.
Then a sufficiently small outward deformation of $\Sigma_H$ along $\ell^a$
produces a nearby surface $\Sigma_H(\delta)$ with
$\Theta_{\mathrm{gen}}^{(\ell)}(\Sigma_H(\delta))>0$.
By continuity of $\Theta_{\mathrm{gen}}$ under two-parameter null deformations,
and since $\Theta_{\mathrm{gen}}^{(k)}$ is positive just outside $\Sigma_H$ and
negative just inside, there must exist a surface preceding $\Sigma_H(\delta)$
along $k^a$ on which $\Theta_{\mathrm{gen}}^{(k)}=0$.
This contradicts the assumption that $\Sigma_H$ is outermost.
Therefore $\Theta_{\mathrm{gen}}^{(\ell)}(\Sigma_H)\le 0$, completing the proof.
\end{proof}

\vspace{6pt}
\begin{proposition}[Causal sealing of weak quantum trapped surfaces]\label{prop:causal-seal}
Let \((\mathcal M,g)\) be a globally hyperbolic semiclassical spacetime in which the Quantum Focussing
Conjecture holds and the generalized entropy \(S_{\mathrm{gen}}\) is \(C^1\) along null generators in a neighbourhood of
a compact codimension-2 surface \(\Sigma\). Assume:
\begin{enumerate}
\item \(\Sigma\) is a weak quantum trapped surface: \(\Theta_{\mathrm{gen}}^{(k)}\le0\) and \(\Theta_{\mathrm{gen}}^{(\ell)}\le0\) on \(\Sigma\),
      with \(\Theta_{\mathrm{gen}}^{(k)}<0\) immediately inside in the \(k^a\)-direction;
\item null generators orthogonal to \(\Sigma\) can be extended within the semiclassical regime until either a caustic (conjugate point)
      or the boundary of the semiclassical domain;
\item no naked classical/quantum singularity appears prior to the breakdown of semiclassical evolution in the considered neighbourhood;
\item the Quantum Focussing Blow-up assumption (A4) holds.
\end{enumerate}
Then there exists an open neighbourhood \(\mathcal B_Q\supset\Sigma\) with the property that no future-directed causal curve
starting in \(\mathcal B_Q\) can reach future null infinity \(\mathscr I^+\) while remaining within the semiclassical regime.
Equivalently, \(\mathcal B_Q\subset \mathcal M\setminus J^-(\mathscr I^+)\) (within the semiclassical domain).
\end{proposition}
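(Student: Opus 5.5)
The argument is a semiclassical version of the Penrose--Hawking proof that a trapped surface lies inside the black hole region, with the Raychaudhuri equation replaced by the QFC and with assumption (A4) supplying the focusing blow-up. I would argue by contradiction. Suppose that for every open neighbourhood $\mathcal B\supset\Sigma$ some future-directed causal curve from $\mathcal B$ reaches $\mathscr I^+$ while staying semiclassical. Taking a sequence of shrinking neighbourhoods and using compactness of $\Sigma$ together with global hyperbolicity, a standard limit-curve argument gives a point $p\in\Sigma$ with $p\in J^-(\mathscr I^+)$. Then $\partial J^+(\Sigma)$ must contain a null generator that leaves $\Sigma$ and reaches $\mathscr I^+$, i.e. it is future-complete. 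The causal-theory facts I would invoke are these: in a globally hyperbolic spacetime $\partial J^+(\Sigma)$ is an achronal Lipschitz hypersurface, its generators are null geodesics orthogonal to $\Sigma$ with past endpoints on $\Sigma$, and a generator leaves $\partial J^+(\Sigma)$ once it passes a conjugate point or a crossing.

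\textbf{Step 1.} Fix such a generator $\gamma$, orthogonal to $\Sigma$ and tangent to $k^a$ or $\ell^a$. By hypothesis 1, the generalized expansion at its starting point is $\le0$. It becomes strictly negative immediately afterwards in the $k$ direction by hypothesis 1, and for $\ell$ I would apply the argument of Lemma~\ref{lem:outermost-trapped} to show it is strictly negative just after $\Sigma$. The QFC says $\Theta_{\mathrm{gen}}$ is nonincreasing, so $\Theta_{\mathrm{gen}}(\lambda)\le-\epsilon<0$ for all later $\lambda$ for which semiclassical evolution holds.

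\textbf{Step 2.} I would then use (A4), the quantum focusing blow-up. I read it as saying that strictly negative $\Theta_{\mathrm{gen}}$ forces $\Theta_{\mathrm{gen}}\to-\infty$ within finite affine parameter $\lambda_c\lesssim 1/\epsilon$ (suitably normalised), unless semiclassicality fails first. This is the quantum analogue of $\theta\le -2/(\lambda_0-\lambda)$. Hypotheses 2 and 3 then turn this divergence into one of two geometric outcomes along $\gamma$. Either there is a conjugate point or caustic, after which $\gamma$ leaves $\partial J^+(\Sigma)$ and enters $I^+(\Sigma)$. Or $\gamma$ reaches the boundary of the semiclassical domain, which by hypothesis 3 is not a naked singularity visible from $\mathscr I^+$. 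In both cases $\gamma$ cannot reach $\mathscr I^+$ as a generator of $\partial J^+(\Sigma)$ inside the semiclassical regime, which contradicts its assumed future-completeness.

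\textbf{Step 3.} Having shown $\Sigma\cap J^-(\mathscr I^+)=\emptyset$, I would thicken to $\mathcal B_Q$ using openness. The conditions $\Theta_{\mathrm{gen}}<0$ just inside $\Sigma$ and the blow-up time $\lambda_c$ vary continuously under small deformations, because $S_{\mathrm{gen}}$ is $C^1$. Hence every compact surface $\Sigma'$ close to $\Sigma$ in a small tubular neighbourhood is quantum trapped with uniform $\epsilon$ and uniform $\lambda_c$, and Steps 1--2 apply to it. Taking $\mathcal B_Q$ as the union of such surfaces gives the claimed inclusion $\mathcal B_Q\subset\mathcal M\setminus J^-(\mathscr I^+)$.

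The main obstacle I expect is Step 2. The difficulty is linking the divergence of the \emph{generalized} expansion to a genuine geometric caustic of $\partial J^+(\Sigma)$. In the classical proof, $\theta\to-\infty$ directly gives a conjugate point. Here the $S_{\mathrm{out}}$ term could in principle diverge on its own, and the argument has to rule that out. I would first try to use (A4), together with the fact that $S_{\mathrm{out}}'$ stays bounded in the semiclassical regime, to show that the classical $\theta$ must diverge. Failing that, I would treat the divergence of $S_{\mathrm{out}}$ itself as the breakdown of semiclassicality, which is the second alternative allowed in hypothesis 2.
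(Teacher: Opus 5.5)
The genuine gap is your Step 3. Steps 1--2 are the classical Penrose argument with QFC and (A4) substituted. Even granting them (modulo the $\ell^a$ issue below), they give at most $\Sigma\cap J^-(\mathscr I^+)=\emptyset$. The obstacle you anticipate in Step 2 is not the real difficulty, since (A4) as formulated in the paper already postulates the focal point.

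The trapping conditions of hypothesis 1 are non-strict, so they are not open. Take the very case where the Proposition is applied, the outermost QMS in Theorem~\ref{thm:conditional-QWCC}. There the cuts $\Sigma_\lambda$ with $\lambda<\lambda_H$, arbitrarily close to $\Sigma$, have $\Theta^{(k)}_{\mathrm{gen}}>0$. No uniform $\epsilon$ or $\lambda_c$ exists either, because the blow-up scale $\Lambda\sim 1/|\Theta_{\mathrm{gen}}(\lambda_0)|$ degenerates at marginality.

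Since $J^-(\mathscr I^+)$ is open in $\mathcal M$, its complement is closed. So $\Sigma\subset\mathcal M\setminus J^-(\mathscr I^+)$ does not give you a neighbourhood: $\Sigma$ may still touch the horizon $\partial J^-(\mathscr I^+)$. A marginally trapped cross-section of a stationary horizon does exactly this classically. That example fails only the strict negativity beyond $\Sigma$ along $k^a$, so that hypothesis must be used to exclude contact, not merely to start the focusing.

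The same problem affects your opening step. Limits of escaping curves give only $p\in\overline{J^-(\mathscr I^+)}$. For $p$ on the horizon, the limit curve is a horizon generator that never reaches $\mathscr I^+$, so no generator of $\partial J^+(\Sigma)$ reaching $\mathscr I^+$ need exist.

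What is missing is an argument that $\Sigma\cap\partial J^-(\mathscr I^+)=\emptyset$. This is what the paper's choice of achronal boundary is aimed at: instead of $\partial J^+(\Sigma)$, it runs QFC and (A4) along a generator of $\partial J^-(\mathscr I^+)$ through a point $q\in\Sigma$. Added to your Steps 1--2, this suffices:
\begin{itemize}
\item Once $\Sigma\subset\mathcal M\setminus J^-(\mathscr I^+)$, the open set $I^+(\Sigma)$ lies in the interior of that set.
\item Hence the horizon generator through a contact point $q$ must be a null normal of $\Sigma$; otherwise its later points would lie in $I^+(\Sigma)$.
\item When it is the $k^a$-normal, hypothesis 1, QFC and (A4) give a focal point of $\Sigma$ on it at finite affine parameter. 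Beyond that point the generator lies in $I^+(\Sigma)$, which contradicts its lying on the horizon.
\item You can then take $\mathcal B_Q=\operatorname{int}(\mathcal M\setminus J^-(\mathscr I^+))$.
\end{itemize}

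Separately, Step 1 is unjustified for $\ell^a$-generators. Lemma~\ref{lem:outermost-trapped} gives only $\Theta^{(\ell)}_{\mathrm{gen}}\le 0$, and its strict-negativity argument relies on the first-zero structure along $k^a$, which $\ell^a$ lacks. Along a generator with $\Theta^{(\ell)}_{\mathrm{gen}}\equiv 0$, (A4) never applies, so nothing stops such a generator of $\partial J^+(\Sigma)$ from reaching $\mathscr I^+$. You can fix this in one of two ways. Either assume strict trapping along $\ell^a$, or, if $\Sigma$ bounds a compact region $D$, replace $\partial J^+(\Sigma)$ by $\partial J^+(\bar D)$, whose null generators leave $\Sigma$ only along the outward normal $k^a$.
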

\paragraph*{\textbf{Causal sealing of quantum trapped regions: a full proof.}}
We now give a detailed and rigorous proof of Proposition~\ref{prop:causal-seal}. The argument is a direct quantum analogue of the classical trapped-surface proof that such surfaces
cannot lie in the causal past of future null infinity. In addition to QFC and the hypotheses used
to derive Theorem~\ref{Theorem1}, we introduce one further (physically mild) focussing blow-up assumption which
replaces the classical Raychaudhuri quadratic inequality and is needed to guarantee the formation
of conjugate points in finite affine parameter.

\vspace{6pt}
\noindent\textbf{Additional assumption.}
\begin{itemize}
\item \textbf{(A4) Quantum Focussing Blow-up (QFB).}  
There exists a (local) constant \(C>0\) and an affine interval length \(\delta>0\) such that, for any future-directed null generator
along which \(\Theta_{\mathrm{gen}}(\lambda_0)<0\), the generalized expansion obeys the integrated bound
\begin{equation}\label{eq:QFB-integral}
\int_{\lambda_0}^{\lambda_0+\delta}\Theta_{\mathrm{gen}}(\lambda)\,d\lambda \le -C < 0.
\end{equation}
Equivalently (and more usefully for the blow-up argument), the negative generalized expansion cannot persist
forever without producing a conjugate point: if \(\Theta_{\mathrm{gen}}(\lambda_0)<0\) then the null generator
develops a focal (conjugate) point at affine parameter \(\lambda\le\lambda_0+\Lambda\) for some finite \(\Lambda=\Lambda(C,\Theta_{\mathrm{gen}}(\lambda_0))\).
\end{itemize}
A sufficient condition (lemma) for (A4) is provided in Appendix~\ref{App:A4}. Physically, assumption (A4) plays the same logical role in the semiclassical argument as classical caustic formation does in Raychaudhuri-based proofs: it ensures that once the generalized expansion becomes negative, null generators cannot remain achronal and extend to future null infinity. While not identical in content, (A4) is the natural semiclassical replacement of the classical focusing theorem.

We now prove Proposition 1.

\begin{proof}
We proceed by contradiction. Suppose the claim is false. Then for every neighbourhood \(U\) of \(\Sigma\) there exists a
future-directed causal curve \(\gamma_U\) starting in \(U\) and reaching \(\mathscr I^+\) while remaining inside the semiclassical
regime. Using global hyperbolicity and a standard limit-curve lemma (see e.g. \cite{Hawking:1973uf,Wald:1984rg,Minguzzi:2007yq}), we can pass to a
sequence of such causal curves \(\gamma_n\) starting at points \(p_n\to p\in\Sigma\) and obtain a causal limit curve
\(\gamma\) with initial point \(p\in\Sigma\) that is future inextendible and limits to \(\mathscr I^+\). By the achronality and
closedness properties of \(J^-(\mathscr I^+)\), the limit curve \(\gamma\) is contained in \(J^-(\mathscr I^+)\) and its image
lies in the semiclassical domain by assumption on the \(\gamma_n\).

Now consider the boundary \(\partial J^-(\mathscr I^+)\). Since \(\mathcal M\) is globally hyperbolic, \(\partial J^-(\mathscr I^+)\)
is an achronal, closed, embedded null hypersurface generated by null geodesic generators \(\eta\) without future endpoints
except on \(\mathscr I^+\). Furthermore, each generator of \(\partial J^-(\mathscr I^+)\) is either complete to \(\mathscr I^+\)
or terminates at a conjugate point (caustic) then leaves the boundary. Crucially, an inextendible null generator of \(\partial J^-(\mathscr I^+)\)
contains no conjugate points in its interior (otherwise it would not be a generator of the achronal boundary) and is thus
achronal between any two of its points.

By the limit-curve construction, there exists a generator \(\eta\) of \(\partial J^-(\mathscr I^+)\) which meets \(\Sigma\) at some point
\(q\) (possibly \(q=p\)). Since \(\eta\) is a boundary generator, it carries no conjugate points in the segment from \(q\) to \(\mathscr I^+\)
(in particular, it is free of focal points in that interval). Thus along \(\eta\) the generalized expansion \(\Theta_{\mathrm{gen}}\)
cannot blow up to \(-\infty\) before reaching \(\mathscr I^+\).

We now use the trapped condition. By hypothesis \(\Theta_{\mathrm{gen}}^{(k)}(q)\le0\), and by the first hypothesis (strict negativity
inside) there exists a small open two-sided tubular neighbourhood \(N\) of \(\Sigma\) for which any generator starting just inside
\(\Sigma\) in the \(k^a\)-direction has \(\Theta_{\mathrm{gen}}<0\) at the initial point. In particular, moving a little along \(\eta\)
from \(q\) into the interior direction (the one that crosses \(\Sigma\)), we can find a point with strictly negative generalized expansion;
call the affine parameter at that point \(\lambda_0\) and write \(\Theta_{\mathrm{gen}}(\lambda_0)<0\).

Apply QFC (monotonicity) along \(\eta\): for \(\lambda\ge\lambda_0\),
\[
\frac{d\Theta_{\mathrm{gen}}}{d\lambda}\le0,
\]
so \(\Theta_{\mathrm{gen}}(\lambda)\le\Theta_{\mathrm{gen}}(\lambda_0)<0\). Hence \(\Theta_{\mathrm{gen}}\) remains strictly negative along the
generator for all affine parameters to the future, as long as the semiclassical regime persists.

Now invoke the Quantum Focussing Blow-up assumption (A4): negativity of \(\Theta_{\mathrm{gen}}\) at \(\lambda_0\) implies that
a conjugate point (focal point) forms on the generator at finite affine parameter \(\lambda\le\lambda_0+\Lambda\), with \(\Lambda\)
depending only on \(\Theta_{\mathrm{gen}}(\lambda_0)\) and the constant \(C\) of (A4). But this contradicts the fact that \(\eta\)
is a generator of the achronal boundary \(\partial J^-(\mathscr I^+)\), which cannot contain an interior conjugate point on the
segment reaching \(\mathscr I^+\). Thus our assumption that some limit generator \(\eta\) meets \(\Sigma\) and continues to \(\mathscr I^+\)
inside the semiclassical domain leads to a contradiction.

Therefore no such generator exists; equivalently, there exists an open neighbourhood \(\mathcal B_Q\) of \(\Sigma\) for which
every future-directed causal curve starting in \(\mathcal B_Q\) fails to reach \(\mathscr I^+\) while semiclassical gravity
remains valid. This completes the proof.
\end{proof}
The two results above combine into a proof of QWCC as framed in this paper.

\begin{theorem}[\textbf{Quantum Weak Cosmic Censorship}]
\label{thm:conditional-QWCC}
Let $(\mathcal M,g)$ be a globally hyperbolic semiclassical spacetime in which the
Quantum Focussing Conjecture (QFC) holds.
Let $\gamma$ be a future-directed null generator with initial generalized expansion
$\Theta_{\mathrm{gen}}(0)>0$, which is future-inextendible at finite affine parameter
$\lambda_*<\infty$ due to an entropy-incomplete singular boundary.
Assume that semiclassical evolution remains valid along $\gamma$ up to a neighbourhood
of the first quantum marginal surface encountered.

Then there exists an outermost quantum marginal surface $\Sigma_H$ at finite affine
parameter such that:
\begin{enumerate}
\item $\Sigma_H$ is weakly quantum trapped, i.e.
\[
\Theta_{\mathrm{gen}}^{(k)}(\Sigma_H)=0,
\qquad
\Theta_{\mathrm{gen}}^{(\ell)}(\Sigma_H)\le 0 ,
\]
\item no future-directed null generator that has entered the region
$\Theta_{\mathrm{gen}}\le 0$ can subsequently reach future null infinity
$\mathscr I^+$ while remaining within the semiclassical regime.
\end{enumerate}
Equivalently, under these assumptions the spacetime satisfies Quantum Weak Cosmic
Censorship as defined in Definition~1.
\end{theorem}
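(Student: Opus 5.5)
The plan is to assemble the theorem from Theorem~\ref{Theorem1}, Lemma~\ref{lem:outermost-trapped} and Proposition~\ref{prop:causal-seal}, in that order.

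\textbf{Step 1: existence of an outermost QMS.} The hypotheses on $\gamma$ are exactly (A1)--(A3): QFC in generator-wise form, $\Theta_{\mathrm{gen}}(0)>0$, and an entropy-incomplete endpoint at $\lambda_*<\infty$. Theorem~\ref{Theorem1} therefore gives a point $\lambda_1\in(0,\lambda_*)$ with $\Theta_{\mathrm{gen}}(\lambda_1)=0$. Set
\[
\lambda_H:=\inf\{\lambda>0:\Theta_{\mathrm{gen}}(\lambda)=0\}.
\]
Continuity of $\Theta_{\mathrm{gen}}$ on $[0,\lambda_*)$ makes the zero set closed, so $\Theta_{\mathrm{gen}}(\lambda_H)=0$. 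Since $\Theta_{\mathrm{gen}}(0)>0$, continuity also gives $\lambda_H>0$. By the intermediate value theorem, $\Theta_{\mathrm{gen}}>0$ on $(0,\lambda_H)$. Hence $\Sigma_H:=\Sigma_{\lambda_H}$ is an outermost QMS in the sense of the Definition.

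\textbf{Step 2: item 1.} Apply Lemma~\ref{lem:outermost-trapped}. This yields $\Theta^{(k)}_{\mathrm{gen}}(\Sigma_H)=0$ and strict negativity just inside $\Sigma_H$. To obtain $\Theta^{(\ell)}_{\mathrm{gen}}\le0$, I must impose that $\Sigma_H$ is outermost with respect to both null normals. I will make this explicit: the outer quantum marginal surface is chosen as outermost in the two-parameter family of deformations. The semiclassical-validity hypothesis near $\Sigma_H$ supplies the continuity of $S_{\mathrm{gen}}$ that the lemma requires.

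\textbf{Step 3: item 2.} Step 2 verifies hypothesis 1 of Proposition~\ref{prop:causal-seal} for $\Sigma_H$. Hypotheses 2 and 3 come from the assumed persistence of the semiclassical regime up to a caustic or the singular boundary. Hypothesis 4 is (A4), which I add explicitly, together with compactness of $\Sigma_H$ (the pencil cut of a compact $\Sigma_0$). The proposition then gives an open neighbourhood $\mathcal B_Q\supset\Sigma_H$ with $\mathcal B_Q\cap J^-(\mathscr I^+)=\emptyset$ in the semiclassical domain.

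To pass from this neighbourhood to an arbitrary generator $\eta$ that has entered $\{\Theta_{\mathrm{gen}}\le0\}$ at some $\lambda_0$, I argue directly by rerunning the core of the proposition's proof.
\begin{enumerate}
\item By QFC, $\Theta_{\mathrm{gen}}(\lambda)\le\Theta_{\mathrm{gen}}(\lambda_0)\le 0$ for all $\lambda\ge\lambda_0$.
\item If $\Theta_{\mathrm{gen}}(\lambda_0)=0$, move slightly forward. Either $\Theta_{\mathrm{gen}}$ becomes strictly negative there, as in the lemma's argument, or it vanishes on an interval. The latter case I will exclude, or handle by genericity.
\item Once $\Theta_{\mathrm{gen}}<0$ at some point, (A4) produces a conjugate point within affine length $\Lambda$.
\item If $\eta$ reached $\mathscr I^+$ it would lie in $J^-(\mathscr I^+)$. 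Then either $\eta$ is a generator of $\partial J^-(\mathscr I^+)$, which cannot carry such a conjugate point, or $\eta$ lies in the interior, where the QMS/trapped structure of the region it entered, sealed by $\mathcal B_Q$, is used.
\end{enumerate}
This gives the contradiction, and with it QWCC as in Definition~1.

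\textbf{Main obstacle.} The hardest step is the last: deducing the universal statement in item 2 for every generator entering $\Theta_{\mathrm{gen}}\le0$ from a local neighbourhood of one surface. A curve in the interior of $J^-(\mathscr I^+)$ need not be an achronal generator, so conjugate points alone do not forbid it reaching $\mathscr I^+$. I would first try the boundary argument: show that if any point of the region $\{\Theta_{\mathrm{gen}}\le0\}$ lay in $J^-(\mathscr I^+)$, then $\partial J^-(\mathscr I^+)$ would meet a weakly quantum trapped cut, reducing to Proposition~\ref{prop:causal-seal}. If that fails, I would weaken the conclusion to generators emanating from $\Sigma_H$ and its sealed neighbourhood $\mathcal B_Q$.
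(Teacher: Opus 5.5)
Your Steps 1--2 and the appeal to Proposition~\ref{prop:causal-seal} match the paper's proof. The paper also takes the first zero from Theorem~\ref{Theorem1}, applies Lemma~\ref{lem:outermost-trapped} under the extra assumption of outermostness along $\ell^a$, and then obtains $\mathcal B_Q$. Your infimum argument for the first zero, and your explicit addition of (A4) and of compactness of $\Sigma_H$, spell out what the paper uses without stating.

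The gap is the final transfer to item~2. The route you sketch does not close. You rerun QFC and (A4) along an arbitrary $\eta$, then split into ``$\eta\subset\partial J^-(\mathscr I^+)$'' versus ``$\eta$ in the interior''. As you note yourself, a conjugate point does not stop a non-achronal curve in the interior of $J^-(\mathscr I^+)$ from reaching $\mathscr I^+$. The phrase ``the QMS/trapped structure \dots\ sealed by $\mathcal B_Q$, is used'' marks exactly where an argument is needed, and none is given. The subcase $\Theta_{\mathrm{gen}}(\lambda_0)=0$, to be handled ``by genericity'', is also left open.

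The missing idea is that no focusing argument is needed at this stage; the transfer is purely causal.
\begin{enumerate}
\item Along $\gamma$, (A1) and the choice of $\lambda_H$ as the first zero give $\{\lambda:\Theta_{\mathrm{gen}}(\lambda)\le 0\}=[\lambda_H,\lambda_*)$. So a generator that has entered $\Theta_{\mathrm{gen}}\le0$ has reached $\Sigma_H$ or lies beyond it along the congruence, i.e.\ in $J^+(\Sigma_H)\subset J^+(\mathcal B_Q)$.
\item $J^-(\mathscr I^+)$ is a past set. Hence $\mathcal B_Q\cap J^-(\mathscr I^+)=\emptyset$ forces $J^+(\mathcal B_Q)\cap J^-(\mathscr I^+)=\emptyset$.
\item Concretely, if such a generator reached $\mathscr I^+$, its tail from the crossing point would be a future-directed causal curve from $\mathcal B_Q$ to $\mathscr I^+$, contradicting Proposition~\ref{prop:causal-seal}.
\end{enumerate}
This is the paper's closing sentence: ``any null generator that has entered the region $\Theta_{\mathrm{gen}}\le0$---and hence has crossed or lies to the interior of $\Sigma_H$---cannot be extended to $\mathscr I^+$''. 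In effect it is your fallback, slightly enlarged by the past-set observation. The paper does not prove the fully universal reading either, for generators that enter $\Theta_{\mathrm{gen}}\le0$ in regions unrelated to $\Sigma_H$. It obtains item~2 only through this identification of the region $\Theta_{\mathrm{gen}}\le 0$ with the part of the congruence beyond $\Sigma_H$. So drop the rerun of (A4) and the genericity case, and state item~2 with that scope.
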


\begin{proof}
By the quantum-entropic obstruction theorem (Theorem~\ref{Theorem1}), the entropy incompleteness
of the singular endpoint together with QFC and the initial condition
$\Theta_{\mathrm{gen}}(0)>0$ guarantees the existence of a finite affine parameter
$\lambda_H\in(0,\lambda_*)$ at which the generalized expansion first vanishes.
Denote the corresponding surface by $\Sigma_H$.

By construction, $\Sigma_H$ is outermost along the null congruence generated by $k^a$.
Assuming outermostness also with respect to deformations along the second future-directed
null normal $\ell^a$, Lemma~\ref{lem:outermost-trapped} implies that $\Sigma_H$ is a
weakly quantum trapped surface: the generalized expansion becomes strictly negative
immediately to the interior along $k^a$, while
$\Theta_{\mathrm{gen}}^{(\ell)}(\Sigma_H)\le 0$.

Proposition~\ref{prop:causal-seal} then applies and yields the existence of a neighbourhood
$\mathcal B_Q\supset\Sigma_H$ with the property that no future-directed causal curve
originating in $\mathcal B_Q$ can reach future null infinity $\mathscr I^+$ so long as
semiclassical gravity remains valid. In particular, any null generator that has entered
the region $\Theta_{\mathrm{gen}}\le 0$—and hence has crossed or lies to the interior of
$\Sigma_H$—cannot be extended to $\mathscr I^+$ within the semiclassical domain.

This is precisely the statement of Quantum Weak Cosmic Censorship in the sense of
Definition~1.
\end{proof}
\paragraph*{\textbf{Conclusion and Discussion}.}  

In this work, we have formulated a quantum analogue of Weak Cosmic Censorship within semiclassical gravity and provided a proof of its validity based on generalized entropy. Our results suggest a thermodynamic duality underlying gravitational collapse: while recent work has demonstrated that hyperentropic regions ($S > A/4G\hbar$) necessitate the formation of singularities \cite{Bousso:2022cun}, we have identified a complementary statement. If a geometric singularity persists within the semiclassical regime, its consistency requires that it be entropically incomplete and causally concealed. In this sense, entropy bounds not only drive singularity formation but also constrain their visibility. The result is more robust than WCCC which is a classical statement and shows that nature forbids naked singularities even if quantum effects are considered.

Our central technical result shows that future-inextendible null generators cannot remain everywhere entropy-expanding if they terminate at a semiclassical (entropy-incomplete) singular boundary. Assuming the Quantum Focussing Conjecture (QFC), the approach toward such a boundary enforces the formation of a Quantum Marginal Surface (QMS) at finite affine parameter. We further demonstrated that the outermost QMS acts as a weakly quantum trapped barrier: the generalized expansion becomes strictly negative immediately to the interior, leading—under appropriate focusing assumptions—to causal sealing of the region from future null infinity. Consequently, no future-directed null generator entering this regime can reach $\mathscr{I}^{+}$ while remaining within the domain of validity of semiclassical gravity.

A key ingredient in this framework is the semiclassical interpretation of singularity. In classical general relativity, singularities are characterized by geodesic incompleteness. Here we adopt entropy incompleteness as the corresponding semiclassical diagnostic. One might argue that a complete theory of quantum gravity could resolve the singularity and render $S_{\mathrm{gen}}$ finite. In such a scenario, however, the spacetime would no longer contain a genuine geometric singularity in the classical sense, and the present analysis would not apply. Our theorem therefore addresses the semiclassical consistency of persistent singular boundaries: if such a boundary exists within the regime where generalized entropy is meaningful, its causal concealment follows from entropy focusing.  

A comment is in order regarding scales and the regime of validity.
Our results are conditional statements within semiclassical gravity:
the generalized entropy functional and the focusing assumptions used
here are assumed to be reliable only on length scales parametrically
larger than the UV cutoff (e.g.\ the Planck scale). Accordingly,
Quantum Weak Cosmic Censorship as formulated here does not assert
control over a Planckian neighborhood of the would-be singular
endpoint. Rather, it states that once a null generator enters a region
in which $\Theta_{\rm gen}\le 0$, it cannot subsequently reach
$\mathcal{I}^+$ while remaining within the domain where semiclassical
gravity and $S_{\rm gen}$ are well-defined.

If the would-be ``sealed'' region were to shrink to the cutoff scale,
this would simply signal the breakdown of the effective description;
the theorem makes no claim about the continuation of spacetime geometry
or of the entropy functional beyond that scale. The logical structure
of the argument requires only that the outermost Quantum Marginal
Surface ($\Theta_{\rm gen}=0$) forms while curvature scales and the
generalized entropy functional remain within the semiclassical regime.
The causal obstruction preventing the generator from reaching
$\mathcal{I}^+$ is therefore established within the regime where the
semiclassical description is valid, prior to the onset of any strongly
coupled UV completion effects.

In this sense the mechanism is not an ``uplift'' of a naked singularity
to a Planckian remnant within semiclassical physics. Instead, it states
that semiclassical entropy focusing precludes causal exposure of
entropy-incomplete endpoints so long as the effective description
remains valid. Any further evolution inside a Planckian region lies
outside the scope of the present analysis.

Recent results \cite{Bousso:2025xyc} further indicate that nonsingular ``bounces'' or smooth interior extensions may be incompatible with the Generalized Second Law in semiclassical gravity. Together with the present work, this suggests that singularities—when they arise—are not arbitrary pathologies, but structures tightly constrained by thermodynamic principles.  

Several open directions remain. Chief among them are establishing entropy incompleteness for physically relevant collapse scenarios (such as scalar field collapse), testing the persistence of semiclassical evolution in explicit dynamical models, and analyzing the stability and evolution of quantum marginal surfaces beyond their formation. Clarifying these issues will help determine whether Cosmic Censorship is an independent dynamical principle, or instead an emergent consequence of the holographic and thermodynamic structure of spacetime. In the same vein, it would be interesting to understand whether a global
generalized-entropy bound of Quantum-Penrose type~\cite{Bousso:2019bkg,Bousso:2019var}
could eventually provide a more precise replacement or justification for the
local endpoint assumption (A3). Such a relation might provide a derivation of
the entropy incompleteness condition from a global semiclassical inequality.
We leave this question for future work.
\paragraph*{\textbf{Acknowledgements}.}
I thank the anonymous referee for constructive comments and suggestions
that helped improve the presentation and clarify several aspects of the
manuscript. I also thank Roberto Emparan for insightful comments and
for discussions on the role of holography in constraining violations of
cosmic censorship.

\bibliography{QWCC}

@article{Bousso:2015mna,
    author = "Bousso, Raphael and Fisher, Zachary and Leichenauer, Stefan and Wall, Aron C.",
    title = "{Quantum focusing conjecture}",
    eprint = "1506.02669",
    archivePrefix = "arXiv",
    primaryClass = "hep-th",
    doi = "10.1103/PhysRevD.93.064044",
    journal = "Phys. Rev. D",
    volume = "93",
    number = "6",
    pages = "064044",
    year = "2016"
}

@article{Minguzzi:2007yq,
    author = "Minguzzi, E.",
    title = "{Limit curve theorems in Lorentzian geometry}",
    eprint = "0712.3942",
    archivePrefix = "arXiv",
    primaryClass = "gr-qc",
    doi = "10.1063/1.2973048",
    journal = "J. Math. Phys.",
    volume = "49",
    pages = "092501",
    year = "2008"
}

@book{Hawking:1973uf,
    author = "Hawking, Stephen W. and Ellis, George F. R.",
    title = "{The Large Scale Structure of Space-Time}",
    doi = "10.1017/9781009253161",
    isbn = "978-1-009-25316-1, 978-1-009-25315-4, 978-0-521-20016-5, 978-0-521-09906-6, 978-0-511-82630-6, 978-0-521-09906-6",
    publisher = "Cambridge University Press",
    series = "Cambridge Monographs on Mathematical Physics",
    month = "2",
    year = "2023"
}

@book{Wald:1984rg,
    author = "Wald, Robert M.",
    title = "{General Relativity}",
    doi = "10.7208/chicago/9780226870373.001.0001",
    publisher = "Chicago Univ. Pr.",
    address = "Chicago, USA",
    year = "1984"
}

@article{Bousso:2022cun,
    author = "Bousso, Raphael and Shahbazi-Moghaddam, Arvin",
    title = "{Singularities from Entropy}",
    eprint = "2201.11132",
    archivePrefix = "arXiv",
    primaryClass = "hep-th",
    doi = "10.1103/PhysRevLett.128.231301",
    journal = "Phys. Rev. Lett.",
    volume = "128",
    number = "23",
    pages = "231301",
    year = "2022"
}

@article{Bousso:2025xyc,
    author = "Bousso, Raphael",
    title = "{Robust Singularity Theorem}",
    eprint = "2501.17910",
    archivePrefix = "arXiv",
    primaryClass = "hep-th",
    doi = "10.1103/6f9b-3jmx",
    journal = "Phys. Rev. Lett.",
    volume = "135",
    number = "1",
    pages = "011501",
    year = "2025"
}

@article{Penrose:1964wq,
    author = "Penrose, Roger",
    title = "{Gravitational collapse and space-time singularities}",
    doi = "10.1103/PhysRevLett.14.57",
    journal = "Phys. Rev. Lett.",
    volume = "14",
    pages = "57--59",
    year = "1965"
}

@article{Bekenstein:1972tm,
    author = "Bekenstein, J. D.",
    title = "{Black holes and the second law}",
    doi = "10.1007/BF02757029",
    journal = "Lett. Nuovo Cim.",
    volume = "4",
    pages = "737--740",
    year = "1972"
}

@article{Maldacena:1997re,
    author = "Maldacena, Juan Martin",
    title = "{The Large $N$ limit of superconformal field theories and supergravity}",
    eprint = "hep-th/9711200",
    archivePrefix = "arXiv",
    reportNumber = "HUTP-97-A097, HUTP-98-A097",
    doi = "10.4310/ATMP.1998.v2.n2.a1",
    journal = "Adv. Theor. Math. Phys.",
    volume = "2",
    pages = "231--252",
    year = "1998"
}

@article{Wall:2010jtc,
    author = "Wall, Aron C.",
    title = "{The Generalized Second Law implies a Quantum Singularity Theorem}",
    eprint = "1010.5513",
    archivePrefix = "arXiv",
    primaryClass = "gr-qc",
    doi = "10.1088/0264-9381/30/19/199501",
    journal = "Class. Quant. Grav.",
    volume = "30",
    pages = "165003",
    year = "2013",
    note = "[Erratum: Class.Quant.Grav. 30, 199501 (2013)]"
}

@article{Penrose:1969pc,
    author = "Penrose, R.",
    title = "{Gravitational collapse: The role of general relativity}",
    doi = "10.1023/A:1016578408204",
    journal = "Riv. Nuovo Cim.",
    volume = "1",
    pages = "252--276",
    year = "1969"
}

@article{Shahbazi-Moghaddam:2022hbw,
    author = "Shahbazi-Moghaddam, Arvin",
    title = "{Restricted quantum focusing}",
    eprint = "2212.03881",
    archivePrefix = "arXiv",
    primaryClass = "hep-th",
    doi = "10.1103/PhysRevD.109.066023",
    journal = "Phys. Rev. D",
    volume = "109",
    number = "6",
    pages = "066023",
    year = "2024"
}

@article{Franken:2025gwr,
    author = "Franken, Victor and Kaya, Sami and Rondeau, Fran{\c{c}}ois and Shahbazi-Moghaddam, Arvin and Tran, Patrick",
    title = "{Tests of restricted Quantum Focusing and a new CFT bound}",
    eprint = "2510.13961",
    archivePrefix = "arXiv",
    primaryClass = "hep-th",
    month = "10",
    year = "2025"
}

@article{Bousso:2019bkg,
    author = "Bousso, Raphael and Shahbazi-Moghaddam, Arvin and Toma{\v{s}}evi{\'c}, Marija",
    title = "{Quantum Information Bound on the Energy}",
    eprint = "1909.02001",
    archivePrefix = "arXiv",
    primaryClass = "hep-th",
    doi = "10.1103/PhysRevD.100.126010",
    journal = "Phys. Rev. D",
    volume = "100",
    number = "12",
    pages = "126010",
    year = "2019"
}

@article{Bousso:2019var,
    author = "Bousso, Raphael and Shahbazi-Moghaddam, Arvin and Tomasevic, Marija",
    title = "{Quantum Penrose Inequality}",
    eprint = "1908.02755",
    archivePrefix = "arXiv",
    primaryClass = "hep-th",
    doi = "10.1103/PhysRevLett.123.241301",
    journal = "Phys. Rev. Lett.",
    volume = "123",
    number = "24",
    pages = "241301",
    year = "2019"
}
\bibliographystyle{utphys1}
\appendix
\section{Sufficient conditions for (A4)}
\label{App:A4}
We now exhibit concrete, easily checked sufficient conditions that imply the Quantum Focussing Blow-up
assumption (A4). The argument follows the standard classical route: derive a differential inequality of the
form \(\partial_\lambda \Theta_{\mathrm{gen}}\le -\kappa\,\Theta_{\mathrm{gen}}^2\) with \(\kappa>0\), and conclude
finite-affine blow-up by integrating the inequality. We use the decomposition
\begin{equation}\label{eq:Theta-decomp}
\Theta_{\mathrm{gen}}=\theta + q\,S'_{\rm out},\qquad q:=\frac{4G\hbar}{A},
\end{equation}
(cf. Eq.~(5.6) of \cite{Bousso:2015mna}).

\begin{lemma}[Sufficient conditions for (A4)]
Let \(\Sigma\) be a compact surface and let \(k^a\) be a future-directed null normal. Fix a neighbourhood
in which \(S_{\rm gen}\) is \(C^2\) along generators and the semiclassical description holds. Suppose there
exist nonnegative constants \(\alpha,\beta,\gamma\) such that along the generator under consideration
the following bounds hold at all points of the neighbourhood:
\begin{align}
|S'_{\rm out}| &\le \alpha\,\frac{A}{4G\hbar}\,|\theta|, \label{eq:bound1}\\
|S''_{\rm out}| &\le \beta\,\theta^2, \label{eq:bound2}\\
R_{ab}k^a k^b &\ge -\gamma\,\theta^2, \label{eq:bound3}
\end{align}
and that the constants satisfy the strict inequality
\begin{equation}\label{eq:alpha-beta-gamma}
\alpha+\beta+\gamma < \frac{1}{n}, \qquad n:=D-2.
\end{equation}
Then there exists \(\kappa>0\) (explicit below) such that along the generator
\[
\partial_\lambda \Theta_{\mathrm{gen}} \le -\kappa\,\Theta_{\mathrm{gen}}^2,
\]
in particular, if \(\Theta_{\mathrm{gen}}(\lambda_0)<0\) then \(\Theta_{\mathrm{gen}}\) blows up to \(-\infty\)
within affine parameter no larger than \(\lambda_0 + 1/(\kappa|\Theta_{\mathrm{gen}}(\lambda_0)|)\). Hence (A4)
holds with \(C\) and \(\Lambda\) determined by \(\kappa\) and \(\Theta_{\mathrm{gen}}(\lambda_0)\).
\end{lemma}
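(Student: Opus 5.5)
The plan is to follow the classical Raychaudhuri route. I will turn the three bounds into a Riccati-type differential inequality for $\Theta_{\mathrm{gen}}$, then integrate it by comparison. Throughout, $'$ denotes $\partial_\lambda$ along the generator, and $A$ denotes the area element of the infinitesimal pencil, so that $A'=\theta A$.

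\emph{Step 1: derivative of the decomposition.} Differentiate Eq.~\eqref{eq:Theta-decomp}. Since $q=4G\hbar/A$, we have $q'=-q\theta$, and hence
\[
\Theta_{\mathrm{gen}}' \;=\; \theta' \;-\; q\,\theta\,S'_{\rm out} \;+\; q\,S''_{\rm out}.
\]
For $\theta'$ I use the Raychaudhuri equation:
\[
\theta'=-\tfrac{1}{n}\theta^2-\sigma_{ab}\sigma^{ab}-R_{ab}k^ak^b\le-\tfrac1n\theta^2+\gamma\theta^2,
\]
where the inequality drops the shear term and applies \eqref{eq:bound3}.

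\emph{Step 2: bound the remaining terms.} By \eqref{eq:bound1}, $|q\theta S'_{\rm out}|\le\alpha\theta^2$. By \eqref{eq:bound2}, $|qS''_{\rm out}|\le q\beta\theta^2$. Here I expect a normalization subtlety. The bound \eqref{eq:bound2} as written carries no factor of $q$, so I would either read it as a bound on $q S''_{\rm out}$, or note that $q\le1$ in the semiclassical regime (area large in Planck units), so that $q\beta\le\beta$. Either reading gives
\[
\Theta_{\mathrm{gen}}'\le-\epsilon\,\theta^2,\qquad \epsilon:=\tfrac1n-\alpha-\beta-\gamma>0,
\]
and $\epsilon>0$ by \eqref{eq:alpha-beta-gamma}.

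\emph{Step 3: pass from $\theta^2$ to $\Theta_{\mathrm{gen}}^2$.} From \eqref{eq:Theta-decomp} and \eqref{eq:bound1}, $|\Theta_{\mathrm{gen}}|\le|\theta|+q|S'_{\rm out}|\le(1+\alpha)|\theta|$. Hence $\theta^2\ge\Theta_{\mathrm{gen}}^2/(1+\alpha)^2$, and
\[
\Theta_{\mathrm{gen}}'\le-\kappa\,\Theta_{\mathrm{gen}}^2,\qquad \kappa:=\frac{\epsilon}{(1+\alpha)^2}.
\]

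\emph{Step 4: integrate by comparison.} Suppose $\Theta_0:=\Theta_{\mathrm{gen}}(\lambda_0)<0$. By QFC, or directly from the inequality above, $\Theta_{\mathrm{gen}}$ stays negative afterwards, so $1/\Theta_{\mathrm{gen}}$ is defined. Since $(1/\Theta_{\mathrm{gen}})'=-\Theta_{\mathrm{gen}}'/\Theta_{\mathrm{gen}}^2\ge\kappa$, we get
\[
\frac{1}{\Theta_{\mathrm{gen}}(\lambda)}\ge\frac{1}{\Theta_0}+\kappa(\lambda-\lambda_0).
\]
The negative quantity $1/\Theta_{\mathrm{gen}}$ must therefore reach $0^-$ no later than $\lambda_0+1/(\kappa|\Theta_0|)$. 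That is, $\Theta_{\mathrm{gen}}\to-\infty$ within that affine distance. On the same interval, rearranging the comparison gives
\[
\Theta_{\mathrm{gen}}(\lambda)\le \frac{\Theta_0}{1-\kappa|\Theta_0|(\lambda-\lambda_0)}.
\]
Integrating this over $[\lambda_0,\lambda_0+\delta]$, with $\delta<1/(\kappa|\Theta_0|)$, yields
\[
\int_{\lambda_0}^{\lambda_0+\delta}\Theta_{\mathrm{gen}}\,d\lambda\le -C,\qquad C:=\tfrac{1}{\kappa}\bigl|\ln\bigl(1-\kappa|\Theta_0|\delta\bigr)\bigr|>0,
\]
which is the integral form \eqref{eq:QFB-integral}. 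I then take $\Lambda=1/(\kappa|\Theta_0|)$.

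\emph{Main obstacle.} The delicate step is turning the blow-up of $\Theta_{\mathrm{gen}}$ into a genuine conjugate point, which is the geometric statement (A4) needs. The plan is to use Step 3 in reverse: $|\theta|\ge|\Theta_{\mathrm{gen}}|/(1+\alpha)$, so $\theta\to-\infty$ as well. By the standard characterization, divergence of $\theta$ at finite affine parameter signals a focal point of the congruence. This argument presupposes that the bounds \eqref{eq:bound1}--\eqref{eq:bound3} persist up to the blow-up, i.e., that the neighbourhood in which they are assumed extends an affine distance $\Lambda$ beyond $\lambda_0$. I would state this as part of the hypothesis. Together with the $q$-normalization of \eqref{eq:bound2} from Step 2, this is where the lemma's hypotheses need the most care.
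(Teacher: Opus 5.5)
Your argument follows the paper's route step for step: differentiate $\Theta_{\mathrm{gen}}=\theta+qS'_{\rm out}$ with $q'=-q\theta$, drop the shear, apply the three bounds to get $\Theta_{\mathrm{gen}}'\le-\epsilon\,\theta^2$, then integrate the Riccati inequality. It is correct. The one substantive difference is in Step 3, and there your version is the right one.

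The paper uses $|\Theta_{\mathrm{gen}}|\ge(1-\alpha)|\theta|$. That is the \emph{upper} bound $\theta^2\le\Theta_{\mathrm{gen}}^2/(1-\alpha)^2$, from which the paper claims $\kappa=\epsilon/(1-\alpha)^2$. But an upper bound on $\theta^2$ cannot be chained with $\Theta_{\mathrm{gen}}'\le-\epsilon\,\theta^2$. Your lower bound $\theta^2\ge\Theta_{\mathrm{gen}}^2/(1+\alpha)^2$ can, so $\kappa=\epsilon/(1+\alpha)^2$ is what the stated bounds actually deliver. The paper's larger constant happens to be valid for $n\ge2$, but only if one keeps the sign correlation between $qS'_{\rm out}$ and $\theta$ in the $-q\theta S'_{\rm out}$ term, which both arguments discard. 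The lemma only asserts that some $\kappa>0$ exists, with the blow-up time written in terms of that same $\kappa$, so your constant suffices.

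Several other things you make explicit are left implicit in the paper:
\begin{itemize}
\item the missing factor of $q$ in the $S''_{\rm out}$ bound (the paper simply writes $q\beta\theta^2=\beta\theta^2$);
\item the derivation of the integral form of (A4);
\item the passage from $\Theta_{\mathrm{gen}}\to-\infty$ to $\theta\to-\infty$;
\item the requirement that the bounds persist up to the blow-up.
\end{itemize}
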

\begin{proof}
Differentiate \eqref{eq:Theta-decomp} along the generator. Using \(q'=-q\theta\) (since \(A'/A=\theta\)) one obtains
\begin{equation}\label{eq:Theta-prime}
\Theta'_{\mathrm{gen}} = \theta' + q\big(S''_{\rm out} - \theta S'_{\rm out}\big).
\end{equation}
The classical Raychaudhuri equation gives
\[
\theta' = -\frac{1}{n}\theta^2 - \sigma^2 - R_{ab}k^a k^b \le -\frac{1}{n}\theta^2 - R_{ab}k^a k^b,
\]
where we discarded the nonpositive shear term \(-\sigma^2\). Combining with \eqref{eq:Theta-prime} and using
the bound \eqref{eq:bound3} yields
\begin{equation}\label{eq:intermediate}
\Theta'_{\mathrm{gen}} \le \Big(-\frac{1}{n} + \gamma\Big)\theta^2 + q\big(S''_{\rm out} - \theta S'_{\rm out}\big).
\end{equation}
Now apply the bounds \eqref{eq:bound1}--\eqref{eq:bound2}. Using \(q( A/(4G\hbar))=1\) we have
\[
|q\,\theta S'_{\rm out}| \le q\,|\theta|\,|S'_{\rm out}| \le \alpha\,\theta^2,
\qquad
|q\,S''_{\rm out}| \le q\,\beta\,\theta^2 = \beta\,\theta^2,
\]
hence
\[
q\big(S''_{\rm out} - \theta S'_{\rm out}\big) \le (\beta+\alpha)\,\theta^2.
\]
Substituting into \eqref{eq:intermediate} gives
\[
\Theta'_{\mathrm{gen}} \le \Big(-\frac{1}{n} + \gamma + \beta + \alpha\Big)\theta^2 .
\]
Use \(\Theta_{\mathrm{gen}}=\theta + qS'_{\rm out}\) and the bound \(|qS'_{\rm out}|\le \alpha|\theta|\) to relate \(|\theta|\)
and \(|\Theta_{\mathrm{gen}}|\). Indeed
\[
|\Theta_{\mathrm{gen}}| \ge |\theta| - |qS'_{\rm out}| \ge (1-\alpha)|\theta|,
\]
so \(|\theta|\le |\Theta_{\mathrm{gen}}|/(1-\alpha)\) provided \(\alpha<1\). (If \(\alpha\ge1\) the hypothesis \eqref{eq:alpha-beta-gamma}
already fails because the right-hand side of \eqref{eq:alpha-beta-gamma} is \(1/n\le 1\).) Therefore
\[
\theta^2 \le \frac{1}{(1-\alpha)^2}\,\Theta_{\mathrm{gen}}^2.
\]
Combining yields
\[
\Theta'_{\mathrm{gen}} \le -\underbrace{\frac{1/n - (\alpha+\beta+\gamma)}{(1-\alpha)^2}}_{:=\kappa}\;
\Theta_{\mathrm{gen}}^2.
\]
By the strict inequality \eqref{eq:alpha-beta-gamma} we have \(\kappa>0\). The standard differential inequality
\(\partial_\lambda \Theta \le -\kappa\Theta^2\) with \(\Theta(\lambda_0)<0\) implies (via separation of variables)
\[
\frac{1}{|\Theta(\lambda)|} \le \frac{1}{|\Theta(\lambda_0)|} - \kappa(\lambda-\lambda_0),
\]
so \(|\Theta(\lambda)|\to\infty\) at latest at
\(\lambda = \lambda_0 + \frac{1}{\kappa|\Theta(\lambda_0)|}\). This produces a focal (conjugate) point in finite affine
parameter, establishing (A4) with explicit constants determined by \(\alpha,\beta,\gamma\) and \(\Theta(\lambda_0)\).
\end{proof}
The three bounds \eqref{eq:bound1}--\eqref{eq:bound3} are physically transparent:
\(\alpha\) controls the relative size of the entropy flux to the geometric expansion, \(\beta\) bounds the entropy curvature,
and \(\gamma\) bounds Ricci focusing relative to \(\theta^2\). Near strong-curvature semiclassical regimes \(|\theta|\) typically
grows large and the area term in \(S_{\mathrm{gen}}\) dominates the entropy derivatives, so one expects \(\alpha,\beta,\gamma\)
to be small and \eqref{eq:alpha-beta-gamma} to hold. Thus the lemma provides a clean, checkable sufficient condition for (A4).
For background on the decomposition \(\Theta_{\mathrm{gen}}=\theta + (4G\hbar/A)S'_{\rm out}\) and the behaviour of the entropy
terms see Bousso \emph{et al.}. Also see the quantum-entropic obstruction proved earlier in this work.

\end{document}